\newfont{\bb}{msbm10 at 12pt}
\newcommand{\p}{\partial}
\newcommand{\dd}{{\rm d}}
\newcommand{\bd}{\begin{definition}}                %inizia definizione
\newcommand{\ed}{\end{definition}}                  %fine definizione
\newcommand{\bc}{\begin{corollary}}                 %inizia corollario
\newcommand{\ec}{\end{corollary}}                   %fine corollario
\newcommand{\bl}{\begin{lemma}}                     %inizia lemma
\newcommand{\el}{\end{lemma}}                       %fine lemma
\newcommand{\bp}{\begin{proposition}}            %inizia proposizione
\newcommand{\ep}{\end{proposition}}                %fine proposizione
\newcommand{\bere}{\begin{remark}}                  %inizia osservazione
\newcommand{\ere}{\end{remark}}                     %fine oservazione
\newcommand{\bt}{\begin{theorem}}
\newcommand{\et}{\end{theorem}}
\newcommand{\be}{\begin{equation}}
\newcommand{\ee}{\end{equation}}
\newcommand{\bit}{\begin{itemize}}
\newcommand{\eit}{\end{itemize}}
\newtheorem{theorem}{Theorem}[section]
\newtheorem{corollary}[theorem]{Corollary}
\newtheorem{lemma}[theorem]{Lemma}
\newtheorem{proposition}[theorem]{Proposition}
\theoremstyle{definition}
\newtheorem{definition}[theorem]{Definition}
\theoremstyle{remark}
\newtheorem{remark}[theorem]{Remark}
\begin{document}
%
%\DeclareGraphicsExtensions{.pdf}

\title{ Weak distinction and the optimal definition of causal continuity\thanks{Published at: Class. Quantum Grav. 25 (2008) 075015.
}}

\author{E. Minguzzi\footnote{Department of Applied Mathematics, Florence
 University, Via S. Marta 3,  50139 Florence, Italy. E-mail: ettore.minguzzi@unifi.it}}

\date{}
\maketitle

\begin{abstract}
\noindent Causal continuity is usually defined by imposing the
conditions (i) distinction and (ii) reflectivity. It is proved here
that a new causality property which stays between weak distinction
and causality, called {\em feeble distinction}, can actually replace
distinction in the definition of causal continuity.  An intermediate
proof shows that feeble distinction and future (past) reflectivity
implies past (resp. future) distinction. Some new characterizations
of weak distinction and reflectivity are given.
\end{abstract}

%\pacs{04.20.-q, 04.20.Gz, 04.20.Cv, 02.40.Ma} \\
%Mathematics Subject Classification: 53C50, 53C80, 83C75
%

%\noindent Key Words:

\section{Introduction}

Recently Bernal and S\'anchez \cite{bernal06b}  proved that causal
simplicity, usually defined by imposing the two properties \cite[p.
65]{beem96}, (a) distinction   and (b) for all $x \in M$ the sets
$J^{+}(x)$ and $J^{-}(x)$  are closed (this property is equivalent
to $J^+=\bar{J}^+$, see \cite[sect. 3.10]{minguzzi06c}), can
actually be improved by replacing (a) with the weaker requirement of
causality. In this work I give a result which goes in the same
direction of optimizing the definitions and results underlying the
causal hierarchy  of the spacetimes.

Causal continuity is usually defined by imposing the conditions
\cite[p. 294]{hawking74} \cite[p. 59,70]{beem96} (i) distinction and
(ii) reflectivity. The distinction condition was defined, quite
naturally, by Hawking and Sachs \cite[p. 292]{hawking74} as the
imposition of both future and past distinction. At the time,
Kronheimer and Penrose had already defined the past, future and the
weak distinction properties \cite[p. 486]{kronheimer67} as follows:
a spacetime is future distinguishing if $I^{+}(x)=I^+(z) \Rightarrow
x=z$; past distinguishing if $I^{-}(x)=I^-(z) \Rightarrow x=z$; and
weakly distinguishing if ``$I^{+}(x)=I^+(z) \textrm{ and }
I^{-}(x)=I^-(z)$'' $\Rightarrow x=z$.

Clearly, future (past) distinction implies weak distinction and
there are examples of spacetimes which are weakly distinguishing but
neither future nor past distinguishing (see figure \ref{wdis}(B)),
thus weak distinction is a strictly weaker property than future or
past distinction. Nevertheless, in this work I am going to prove
(corollary \ref{cor}) that condition (i) defining causal continuity
can be replaced with (i') {\em feeble distinction}, a property
which, as I will show, is even weaker than weak distinction. This
result comes from a interesting lemma which mixes future and past
properties (otherwise usually found separated in other theorems),
namely feeble distinction and future (past) reflectivity implies
past (resp. future) distinction (theorem \ref{psr}).

I denote with $(M,g)$ a $C^{r}$ spacetime (connected, time-oriented
Lorentzian manifold), $r\in \{3, \dots, \infty\}$ of arbitrary
dimension $n\geq 2$ and signature $(-,+,\dots,+)$. On $M\times M$
the usual product topology is defined. The subset symbol $\subset$
is reflexive, i.e. $X \subset X$. The closure of the causal future
on $M\times M$ is denoted $A^+$, that is, $A^{+}=\bar{J}^+$. For
other notations concerning causal sets the reader is referred to
\cite{minguzzi07b}.

\begin{figure}
\centering \psfrag{R}{Remove segment} \psfrag{P}{Remove point}
\psfrag{A}{\!\!\!\!(A)}  \psfrag{B}{\!\!\!\!(B)}
\includegraphics[width=8cm]{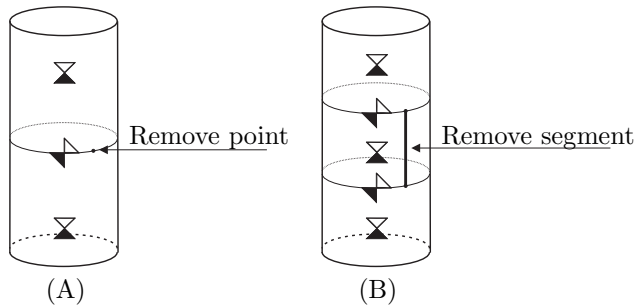}
\caption{(A). A non-total imprisoning (hence causal), reflecting,
non-feebly distinguishing (and hence non-causally continuous)
spacetime. (B). A weakly distinguishing spacetime which is neither
future nor past distinguishing. The causal relation $D^{+}$ is
antisymmetric while $D^{+}_p$ and $D^{+}_f$ are not. Note that both
future and past reflectivity fail to hold as one should expect from
theorem \ref{psr}. } \label{wdis}
\end{figure}

\section{Weak distinction}

Since the property of weak distinction has been only marginally used
in causality theory I devote a few pages to its study, in particular
I develop some equivalent characterizations. The reader is assumed
to be familiar with the approach to causal relations as subsets of
$M\times M$ (see \cite{minguzzi06c} and \cite{minguzzi07b}).

Recall  that the relations on $M$
\begin{align*}
D^{+}_f&=\{(x,y): y \in \overline{I^{+}(x)}\,\},\\
D^{+}_p&=\{(x,y): x \in \overline{I^{-}(y)}\,\},
\end{align*}
are reflexive and transitive. Moreover, the spacetime is future
(past) distinguishing iff $D^{+}_f$ (resp. $D^+_p$) is antisymmetric
\cite{minguzzi07b}. Define $D^+=D^+_p\cap D^+_f$ so that
\begin{equation} \label{nis}
D^+=\{(x,y): y \in \overline{I^{+}(x)}\textrm{ and } x \in
\overline{I^{-}(y)} \}.
\end{equation}
Recall \cite{minguzzi07b} that $D^{+}_f=A^+$ iff the spacetime is
future reflecting and $D^{+}_p=A^+$ iff the spacetime is past
reflecting (for other equivalent characterizations of reflectivity
see \cite{minguzzi06c}). Since $D^+_p,D^+_f \subset A^+$, it is
$D^+=A^+$ iff $D^{+}_f=A^+$ and $D^+_p=A^+$. This observation reads
\begin{lemma}
A spacetime $(M,g)$ is reflecting iff $D^{+}=A^{+}$.
\end{lemma}

The antisymmetry condition for $D^{+}$ is equivalent to weak
distinction, indeed it holds.

\begin{lemma} \label{mos}
The following conditions on a spacetime $(M,g)$ are equivalent
\begin{enumerate}
\item\label{1}  $I^{+}(x)= I^{+}(y)$ and $I^{-}(x)= I^{-}(y)$ imply
$x=y$.
\item \label{2} $D^{+}$ is antisymmetric.
%\item \label{3} $(x,z) \in J^+$ and $(z,x) \in D^+$ $\Rightarrow x=z$.
\item \label{4} The map $M \to P(M)$ defined by $x \to I^{+}(x) \cup
I^{-}(x)$ is injective.
\item \label{5} The map $M \to P(M)$ defined by $x \to D^{+}(x)$ is injective.
\item \label{6} The map $M \to P(M)$ defined by $x \to D^{-}(x)$ is injective.
\end{enumerate}
\end{lemma}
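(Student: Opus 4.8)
The plan is to prove the equivalences by establishing a cycle of implications, exploiting the definitions recalled in the excerpt. Recall that $D^+(x) = \{y : (x,y) \in D^+\}$ denotes the future $D^+$-cone of $x$, and by \eqref{nis} we have $y \in D^+(x)$ iff $y \in \overline{I^+(x)}$ and $x \in \overline{I^-(y)}$; symmetrically $D^-(x)$ is the past cone. The natural route is to show $\ref{1} \Leftrightarrow \ref{2}$, then $\ref{2} \Leftrightarrow \ref{5}$ and $\ref{2} \Leftrightarrow \ref{6}$ (the symmetric statement), and finally connect condition \ref{4} to the others. Throughout I would lean on the fact that $x \in \overline{I^+(x)}$ and $x \in \overline{I^-(x)}$, so that $x \in D^+(x) \cap D^-(x)$ always holds, which is what makes the cones genuinely detect the point.

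For \ref{1} $\Leftrightarrow$ \ref{2}, I would unwind antisymmetry of $D^+$ directly: it says $(x,y) \in D^+$ and $(y,x) \in D^+$ imply $x=y$. By \eqref{nis}, $(x,y) \in D^+$ means $y \in \overline{I^+(x)}$ and $x \in \overline{I^-(y)}$, while $(y,x) \in D^+$ means $x \in \overline{I^+(y)}$ and $y \in \overline{I^-(x)}$. The obstacle here is that antisymmetry of $D^+$ is phrased in terms of closures of chronological cones, whereas condition \ref{1} is phrased in terms of equality of the open cones $I^{\pm}$. The key technical fact I would use to bridge this is that $\overline{I^+(x)} = \overline{I^+(y)}$ iff $I^+(x) = I^+(y)$ (and similarly for the past), which follows because $I^{\pm}(x)$ are open and $\overline{I^+(x)}$ determines $I^+(x)$ as its chronological interior. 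Granting this, the conjunction in the hypothesis of \ref{1} is equivalent to the mutual-membership condition appearing in antisymmetry, so the two statements coincide.

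For \ref{2} $\Leftrightarrow$ \ref{5} I would argue that injectivity of $x \mapsto D^+(x)$ is equivalent to antisymmetry of the reflexive, transitive relation $D^+$: this is a purely order-theoretic fact. If $D^+(x) = D^+(y)$ then since $x \in D^+(x)$ we get $x \in D^+(y)$, i.e. $(y,x) \in D^+$, and symmetrically $(x,y) \in D^+$, so antisymmetry forces $x=y$; conversely, if $(x,y),(y,x) \in D^+$ then transitivity gives $D^+(x) = D^+(y)$, so injectivity forces $x=y$. The equivalence \ref{2} $\Leftrightarrow$ \ref{6} is the same argument applied to $D^-$, after noting that $D^+$ is antisymmetric iff $D^-$ is (since $(x,y) \in D^+ \Leftrightarrow (y,x) \in D^-$, antisymmetry is a symmetric notion under reversal).

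Finally, for condition \ref{4}, I would relate the map $x \mapsto I^+(x) \cup I^-(x)$ to condition \ref{1}. Injectivity of this map says exactly that $I^+(x) \cup I^-(x) = I^+(y) \cup I^-(y)$ implies $x=y$, which is formally weaker-looking than \ref{1} because it only requires equality of the \emph{unions}, not of the two cones separately; so the implication \ref{4} $\Rightarrow$ \ref{1} is immediate, while \ref{1} $\Rightarrow$ \ref{4} is the delicate direction and I expect it to be the main obstacle. To recover the separate equalities from the equality of unions I would use the causal structure: a point $p$ lies in $I^+(x) \cap I^-(x)$ precisely when there is a closed timelike curve through a neighborhood relating $p$ to $x$ both ways, and in a generic point the future and past parts of the union can be disentangled by the chronology relation $\ll$ itself. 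The cleanest way is probably to show that $I^+(x) \cup I^-(x) = I^+(y) \cup I^-(y)$ already forces $\overline{I^+(x)} = \overline{I^+(y)}$ and $\overline{I^-(x)} = \overline{I^-(y)}$ (hence $x \in D^+(y) \cap D^-(y)$ and vice versa), so that one lands back on condition \ref{2}; pushing this separation through is the step I would spend the most care on, and it may require invoking that the union cannot collapse the two cones without violating the local Minkowskian structure near $x$.
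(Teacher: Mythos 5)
Your treatment of the equivalences \ref{1} $\Leftrightarrow$ \ref{2}, \ref{2} $\Leftrightarrow$ \ref{5} and \ref{2} $\Leftrightarrow$ \ref{6} is sound: the first matches the paper's argument (both rest on the standard facts that $z \in \overline{I^{+}(w)}$ implies $I^{+}(z) \subset I^{+}(w)$, together with the reflexivity and transitivity of $D^{+}_f$ and $D^{+}_p$ recalled before the lemma), and for the injectivity conditions you prove directly the order-theoretic fact that the paper imports as theorem 2.3(c) of \cite{minguzzi07b}; proving it in two lines as you do is a legitimate, self-contained alternative.

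The genuine gap is in \ref{1} $\Rightarrow$ \ref{4}, which you correctly single out as the delicate direction but never actually prove. The missing idea is that condition \ref{1} by itself implies chronology: a pair $w \ne z$ with $w \ll z \ll w$ would give $I^{\pm}(w) = I^{\pm}(z)$, contradicting \ref{1}; hence $I^{+}(z) \cap I^{-}(z) = \emptyset$ for every $z$. With chronology in hand the disentangling of the union is elementary and global, not local: if $x \ne y$, $I^{+}(x) \cup I^{-}(x) = I^{+}(y) \cup I^{-}(y)$ and $x' \gg x$, then $x' \gg y$ or $x' \ll y$; the second alternative gives $x \ll x' \ll y$, so $y \in I^{+}(x) \subset I^{+}(y) \cup I^{-}(y)$, i.e. $y \in I^{+}(y)$ or $y \in I^{-}(y)$, a chronology violation. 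Hence $I^{+}(x) \subset I^{+}(y)$, and exchanging the roles of $x$ and $y$, and of past and future, all four cones agree, contradicting \ref{1}. Your proposed substitute---arguing that equality of the unions forces $\overline{I^{+}(x)} = \overline{I^{+}(y)}$ and $\overline{I^{-}(x)} = \overline{I^{-}(y)}$ because the union ``cannot collapse the two cones without violating the local Minkowskian structure near $x$''---is not a proof and points at the wrong mechanism: no analysis in a neighborhood of $x$ can decide whether a point of $I^{+}(x)$ far from $x$ lies in $I^{+}(y)$ or in $I^{-}(y)$; what excludes the latter is the absence of closed timelike curves, and that must first be extracted from hypothesis \ref{1}. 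Without that step, the separation claim you propose to ``land on'' is exactly what remains to be shown.
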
 \ref{1}
 $\Rightarrow$ \ref{2}. Assume $D^{+}$ is not antisymmetric then
there are $x$ and $y$, $x\ne y$ such that $(x,y) \in D^{+}$ and
$(y,x) \in D^{+}$ which reads ``$y \in \overline{I^{+}(x)}$ and $x
\in \overline{I^{-}(y)}$ and $x \in \overline{I^{+}(y)}$ and $y \in
\overline{I^{-}(x)}$''. $y \in \overline{I^{+}(x)}$ and $x \in
\overline{I^{+}(y)}$ implies $I^{+}(y)=I^{+}(x)$ while $x \in
\overline{I^{-}(y)}$ and $y \in \overline{I^{-}(x)}$ implies
$I^{-}(y)=I^-(x)$, thus \ref{1} does not hold.

\ref{2} $\Rightarrow$ \ref{1}. Assume \ref{1} does not hold. There
are $x \ne y$ such that $I^{+}(x)= I^{+}(y)$ and $I^{-}(x)=
I^{-}(y)$, thus $y \in \overline{I^{+}(y)}=\overline{I^{+}(x)}$,  $x
\in \overline{I^{-}(x)}=\overline{I^{-}(y)}$, and analogously with
the roles of $x$ and $y$ exchanged. Thus $(x,y) \in D^{+}$ and
$(y,x) \in D^{+}$, i.e. $D^+$ is not antisymmetric.

%\ref{2} $\Rightarrow$ \ref{3}. Trivial.
%
%\ref{3} $\Rightarrow$ \ref{1}.

\ref{4} $\Rightarrow$ \ref{1}. Indeed if \ref{1} does not hold there
are $x\ne y$, such that $I^{+}(x)=I^{+}(y)$ and $I^{-}(x) =
I^{-}(y)$, thus $I^{+}(x)\cup I^{-}(x)=I^{+}(y)\cup I^{-}(y)$, and
hence \ref{4} does not hold, a contradiction.

\ref{1} $\Rightarrow$ \ref{4}. First, \ref{1} implies that $(M,g)$
is chronological indeed, the existence of $w\ne z$ with $w \ll z \ll
w$ would imply $I^{+}(w)=I^{+}(z)$ and $I^{-}(w) = I^{-}(z)$ which
contradicts 1. Since $(M,g)$ is chronological for every event $z$
the sets $I^{+}(z)$ and $I^{-}(z)$ are disjoint. Assume \ref{4} does
not hold then there are $x\ne y$, such that $I^{+}(x)\cup
I^{-}(x)=I^{+}(y)\cup I^{-}(y)$, but given $x' \gg x$ it is $x' \ll
y$ or $x' \gg y$. But the former possibility can not hold because it
implies  $x \ll y$  thus $y$ must belong to $I^{+}(y)$ or $I^{-}(y)$
both cases implying a violation of chronology. Thus $I^{+}(x)
\subset I^{+}(y)$, and changing the roles of $x$ and $y$, $I^{+}(y)
\subset I^{+}(x)$, i.e. $I^{+}(x) = I^{+}(y)$. Changing the roles of
past and future $I^{-}(x) = I^{-}(y)$ which contradicts \ref{1},
hence, by contradiction, \ref{4} must hold.

\ref{2} $\Leftrightarrow$ (\ref{5} and \ref{6}). It follows from
theorem 2.3(c) of \cite{minguzzi07b}.

\begin{definition}
A spacetime is weakly distinguishing if it satisfies the equivalent
properties of lemma \ref{mos}.
\end{definition}

Note that $D^+$ is transitive and reflexive because $D^+_f$ and
$D^+_p$ are transitive and reflexive, moreover it is trivial to
prove that if $D^+_f$ or $D^+_p$ is antisymmetric then $D^+$ is
antisymmetric. This observation gives another proof of the well
known, already mentioned in the introduction, result that

\begin{lemma}
If $(M,g)$ is past or future distinguishing then it is weakly
distinguishing.
\end{lemma}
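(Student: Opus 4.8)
The plan is to reduce the whole statement to the antisymmetry of the relations $D^{+}_f$, $D^{+}_p$ and their intersection $D^{+}$, using the characterizations already recorded above. Recall that the spacetime is future distinguishing iff $D^{+}_f$ is antisymmetric and past distinguishing iff $D^{+}_p$ is antisymmetric, while by Lemma \ref{mos} (equivalence of condition \ref{1} with condition \ref{2}) weak distinction is equivalent to antisymmetry of $D^{+}$. Thus the entire claim follows once I show that antisymmetry of \emph{either} $D^{+}_f$ \emph{or} $D^{+}_p$ forces antisymmetry of $D^{+}$.

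First I would observe that, by definition, $D^{+}=D^{+}_p\cap D^{+}_f$, so $D^{+}\subset D^{+}_f$ and $D^{+}\subset D^{+}_p$ as subsets of $M\times M$. The key elementary fact is that any subrelation of an antisymmetric relation is itself antisymmetric: if $R\subset S$ with $S$ antisymmetric, then from $(x,y)\in R$ and $(y,x)\in R$ we get $(x,y),(y,x)\in S$, whence $x=y$. Applying this, suppose $(M,g)$ is, say, future distinguishing; then $D^{+}_f$ is antisymmetric, and since $D^{+}\subset D^{+}_f$ the relation $D^{+}$ is antisymmetric. The past-distinguishing case is identical with $D^{+}_p$ in place of $D^{+}_f$. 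In either case $D^{+}$ is antisymmetric, so by Lemma \ref{mos} the spacetime is weakly distinguishing.

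I do not expect any genuine obstacle here: all the substance sits in the two equivalences quoted from the preceding discussion (future/past distinction against antisymmetry of $D^{+}_f$/$D^{+}_p$, and weak distinction against antisymmetry of $D^{+}$), after which the argument collapses to the one-line remark that passing to a subrelation preserves antisymmetry. The only point worth flagging is that the reflexivity and transitivity of $D^{+}$ noted just before the statement are not actually used in this direction — only the inclusions $D^{+}\subset D^{+}_f$ and $D^{+}\subset D^{+}_p$ together with the antisymmetry of one of the two factors are needed.
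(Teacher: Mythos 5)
Your proof is correct and follows essentially the same route as the paper: the paper also derives this lemma from the characterizations of future/past distinction as antisymmetry of $D^{+}_f$/$D^{+}_p$ and of weak distinction as antisymmetry of $D^{+}$ (Lemma \ref{mos}), together with the observation that antisymmetry of $D^{+}_f$ or $D^{+}_p$ passes to the intersection $D^{+}=D^{+}_p\cap D^{+}_f$. Your explicit remark that a subrelation of an antisymmetric relation is antisymmetric is exactly the ``trivial to prove'' step the paper alludes to.
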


\begin{remark}
It is easy to prove that past or future distinction at $x$ implies
weak distinction at $x$, however, it is a non trivial matter to find
a counterexample of the converse. An example of a weakly
distinguishing spacetime in which there is a event $p$ at which the
spacetime is neither future nor past distinguishing can be obtained
from the spacetime of figure \ref{feeble} by removing the point $q$.
\end{remark}

\begin{remark}
The possibility of expressing weak distinction through the
injectivity of the map $x \to D^+(x)$,  is a consequence of the
reflexivity and transitivity  of $D^+$ (see theorem 2.3(c) of
\cite{minguzzi07b}). Actually, other causality properties such as
strong causality can be characterized in terms of the injectivity of
a suitable causal set function although, as far as I know, strong
causality cannot be  obtained as an antisymmetry condition for a
suitable reflexive and transitive causal relation. The idea of
expressing the causality conditions as injectivity conditions on
causal set functions goes back to I. R\'acz \cite{racz87}.
\end{remark}

Weak distinction and the  causal structures of Kronheimer and
Penrose \cite{kronheimer67} are intimately related as the next two
lemmas prove.

\begin{lemma}
$I^{+}$ is a left $D^{+}_p$-ideal, that is, $I^{+}\subset D^+_p$ and
$ D^{+}_p \circ I^{+} \subset I^{+}$, analogously $I^{+}$ is a right
$D^{+}_f$-ideal, that is, $I^{+}\subset D^+_f$ and $  I^{+} \circ
D^{+}_f \subset I^{+}$.

$I^{+}$ is a $D^{+}$-ideal, and the triple $(M,D^{+},I^{+})$ is a
causal structure in the sense of Kronheimer and Penrose iff the
spacetime is weakly distinguishing.
\end{lemma}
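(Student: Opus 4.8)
The plan is to dispatch the two ideal statements first, obtain the $D^{+}$-ideal property by intersection, and finally read off the Kronheimer--Penrose axioms one by one, isolating the single nontrivial one. The set-inclusions are immediate from the definitions: if $(x,y)\in I^{+}$ then $x\ll y$, so $x\in I^{-}(y)\subset\overline{I^{-}(y)}$ and $y\in I^{+}(x)\subset\overline{I^{+}(x)}$, which gives at once $I^{+}\subset D^{+}_p$ and $I^{+}\subset D^{+}_f$. The real content lies in the absorption (composition) conditions, and here the key tool is the openness of the chronological sets $I^{\pm}(\cdot)$.

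For the left $D^{+}_p$-ideal condition $D^{+}_p\circ I^{+}\subset I^{+}$ I would take a pair in the composition, that is points $x,y,z$ with $x\ll y$ and $y\in\overline{I^{-}(z)}$, and argue as follows: since $I^{+}(x)$ is open and contains $y$, while $y$ lies in the closure of the open set $I^{-}(z)$, the intersection $I^{+}(x)\cap I^{-}(z)$ is nonempty; picking $p\in I^{+}(x)\cap I^{-}(z)$ gives $x\ll p\ll z$, whence $x\ll z$, i.e.\ $(x,z)\in I^{+}$. The right $D^{+}_f$-ideal condition $I^{+}\circ D^{+}_f\subset I^{+}$ is entirely symmetric: from $y\in\overline{I^{+}(x)}$ and $y\ll z$ one uses that $I^{-}(z)$ is open and meets $I^{+}(x)$ to produce the intermediate point.

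Granting these, the $D^{+}$-ideal property is a formal consequence of $D^{+}=D^{+}_p\cap D^{+}_f$: the inclusion $I^{+}\subset D^{+}$ follows from $I^{+}\subset D^{+}_p$ together with $I^{+}\subset D^{+}_f$, while $D^{+}\circ I^{+}\subset D^{+}_p\circ I^{+}\subset I^{+}$ and $I^{+}\circ D^{+}\subset I^{+}\circ D^{+}_f\subset I^{+}$ yield the two-sided absorption. Thus $I^{+}$ is a $D^{+}$-ideal unconditionally.

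For the final equivalence I would check the Kronheimer--Penrose axioms for the triple $(M,D^{+},I^{+})$, with $D^{+}$ playing the role of the causal relation and $I^{+}$ that of chronology. Reflexivity and transitivity of $D^{+}$, transitivity of $I^{+}$, the inclusion $I^{+}\subset D^{+}$, and the two absorption laws $D^{+}\circ I^{+}\subset I^{+}$ and $I^{+}\circ D^{+}\subset I^{+}$ have all just been established and hold for any spacetime. The only two axioms that can fail are antisymmetry of $D^{+}$ and irreflexivity of $I^{+}$ (i.e.\ chronology). By Lemma \ref{mos} antisymmetry of $D^{+}$ is exactly weak distinction, and weak distinction forces chronology: a point $w$ with $w\ll w$ lies on a closed timelike curve, which being nonconstant carries a distinct point $z$ with $w\ll z\ll w$, so that $I^{+}(w)=I^{+}(z)$ and $I^{-}(w)=I^{-}(z)$, contradicting property \ref{1} of Lemma \ref{mos}. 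Hence weak distinction yields all the axioms, and conversely a causal structure has antisymmetric $D^{+}$ and is therefore weakly distinguishing. The main obstacle is the absorption argument of the second paragraph, everything else being bookkeeping against the axiom list, so care must be taken that the openness of $I^{\pm}$ is used correctly to pass from a point in a closure to an honest chronological intermediate point.
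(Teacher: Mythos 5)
Your proof is correct and is essentially the paper's own argument written out in full: the paper dismisses this lemma as following ``trivially from the definitions and from the fact that $I^{+}$ is open,'' and your absorption arguments (using openness of $I^{+}(x)$ and $I^{-}(z)$ to convert closure points into genuine chronological intermediaries, under the composition convention of the cited reference, where $D^{+}_p\circ I^{+}$ means $I^{+}$ followed by $D^{+}_p$) are exactly the suppressed details. Your reduction of the Kronheimer--Penrose axioms to antisymmetry of $D^{+}$ plus chronology likewise mirrors the paper, since antisymmetry of $D^{+}$ is weak distinction by lemma \ref{mos} and chronology is derived from weak distinction inside the proof of that same lemma.
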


\begin{proof}
It follows trivially from the definitions and from the fact that
$I^{+}$ is open.
\end{proof}

Let $\{R^+_\alpha\}$ be a set of relations labeled by an index
$\alpha$. Note that if $R_{\alpha}^+ \circ I^+ \subset I^+$ for
every $\alpha$ then $\bigcup_\alpha R_\alpha^+ \circ I^{+} \subset
I^{+}$, thus there is the largest set with respect to which $I^{+}$
is a left ideal. This largest set necessarily contains $I^{+}$
because $I^+\circ I^+\subset I^+$. Analogous considerations hold for
the property $I^+\circ R^{+} \subset I^+$ and `` $I^+\circ R^{+}
\subset I^+$ and $R^+ \circ I^+ \subset I^+$''.

\begin{lemma}
The set $D^{+}_p$ is the largest set which satisfies $D^+_p \circ
I^+ \subset I^+$, analogously,   $D^{+}_f$ is the largest set which
satisfies $ I^+ \circ D_f^{+} \subset I^+$ and $D^{+}$ is the
largest set  which satisfies both $D^{+} \circ I^+ \subset I^+$ and
$ I^+ \circ D^{+} \subset I^+$. In particular if $(M, R^+, I^{+})$
is a causal structure in the sense of Kronheimer and Penrose then
$R^{+} \subset D^+$.
\end{lemma}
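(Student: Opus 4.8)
The plan is to reduce each maximality claim to a containment between chronological pasts (resp. futures) and then to identify that containment with membership in $D^+_p$ (resp. $D^+_f$). First I would unwind the defining inclusion for an arbitrary relation $R$: the condition $R\circ I^+\subset I^+$ says precisely that whenever $x\ll y$ and $(y,z)\in R$ one has $x\ll z$, and fixing $(y,z)\in R$ this is exactly the statement $I^-(y)\subset I^-(z)$. Hence the largest relation $R$ with $R\circ I^+\subset I^+$ is $\{(y,z):I^-(y)\subset I^-(z)\}$, and it suffices to prove that this set equals $D^+_p$. The inclusion $D^+_p\subset\{(y,z):I^-(y)\subset I^-(z)\}$ is nothing but the left-ideal property $D^+_p\circ I^+\subset I^+$ recorded in the preceding lemma; since the empty-check shows $D^+_p$ is itself admissible, we have $D^+_p\subset R_{\max}$.

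For the reverse inclusion $R_{\max}\subset D^+_p$ I would argue directly. Given $(y,z)$ with $I^-(y)\subset I^-(z)$, pick a past-directed timelike curve ending at $y$, producing points $w_n\to y$ with $w_n\ll y$ (so that $y\in\overline{I^-(y)}$, which holds in any spacetime). Then $w_n\in I^-(y)\subset I^-(z)$, so $w_n\ll z$ and $y\in\overline{I^-(z)}$, i.e. $(y,z)\in D^+_p$. This yields $D^+_p=\{(y,z):I^-(y)\subset I^-(z)\}$, the first maximality statement. The claim for $D^+_f$ follows by the time-dual argument: $I^+\circ R\subset I^+$ unwinds into $I^+(y)\subset I^+(x)$ for each $(x,y)\in R$, and a future-directed approximating sequence $c_n\to y$ with $y\ll c_n$ gives $x\ll c_n$, hence $y\in\overline{I^+(x)}$ and $(x,y)\in D^+_f$.

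For $D^+$ I would combine the two previous results. If $R$ satisfies both $R\circ I^+\subset I^+$ and $I^+\circ R\subset I^+$, then by the two maximality statements $R\subset D^+_p$ and $R\subset D^+_f$, hence $R\subset D^+_p\cap D^+_f=D^+$; conversely $D^+\subset D^+_p$ and $D^+\subset D^+_f$ together with monotonicity of composition give $D^+\circ I^+\subset D^+_p\circ I^+\subset I^+$ and $I^+\circ D^+\subset I^+\circ D^+_f\subset I^+$, so $D^+$ is itself admissible and therefore the largest such relation. Finally, a Kronheimer--Penrose causal structure $(M,R^+,I^+)$ has among its axioms that $I^+$ is a two-sided $R^+$-ideal, i.e. $R^+\circ I^+\subset I^+$ and $I^+\circ R^+\subset I^+$, so the maximality of $D^+$ forces $R^+\subset D^+$ at once. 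The only genuine content is the closure direction of the equivalence $I^-(y)\subset I^-(z)\Leftrightarrow y\in\overline{I^-(z)}$; I expect the main subtlety to be keeping the composition convention straight, so that $R\circ I^+$ produces the past-containment leading to $D^+_p$ rather than the future one, with everything else being routine bookkeeping.
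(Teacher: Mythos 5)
Your proof is correct and rests on the same mechanism as the paper's: since $R\circ I^+\subset I^+$ is a condition that each pair of $R$ satisfies or fails individually, the largest admissible relation is the set of all pairs satisfying it, and that set is identified with $D^+_p$ via the equivalence $I^-(y)\subset I^-(z)\Leftrightarrow y\in\overline{I^-(z)}$. The organizational differences are minor but real: the paper argues by contradiction, citing the characterization of $D^+_p$ from lemma 4.2 of the reference instead of proving the closure equivalence inline as you do; for the $D^+$ clause it adjoins the single offending pair to $D^+_p$ to contradict the first maximality statement, where you use the cleaner intersection-plus-monotonicity argument; and you spell out the Kronheimer--Penrose clause, which the paper leaves implicit. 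One substantive remark: the composition convention you flagged is indeed the crux, and you resolved it in the only way that is internally consistent. The reading of $\circ$ under which the preceding lemma ($D^+_p\circ I^+\subset I^+$ and $I^+\circ D^+_f\subset I^+$) holds in a general spacetime is precisely yours, namely $(x,z)\in A\circ B$ iff there is $y$ with $(x,y)\in B$ and $(y,z)\in A$; under the opposite reading, $D^+_p\circ I^+\subset I^+$ would assert that $x\in\overline{I^-(y)}$ and $y\ll z$ imply $x\ll z$, which amounts to $D^+_p\subset D^+_f$, a reflectivity-type condition that fails in general spacetimes. The paper's printed proof actually unwinds $R^+\circ I^+$ in that opposite order (``for every $y$ such that $(z,y)\in I^+$ it is $(x,y)\in I^+$'') and then pairs it with the characterization $D^+_p=\{(x,z):\forall y\in I^+(z),\ x\in I^-(y)\}$, i.e. $I^+(z)\subset I^+(x)$, which under the paper's own definitions is the characterization of $D^+_f$ rather than $D^+_p$; these two orientation slips cancel, so the paper's conclusion stands, but your argument, which keeps the orientation consistent with the ideal property and proves the needed closure fact directly with approximating sequences, is the more self-contained and coherent of the two.
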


\begin{proof}
We already know that $ D^{+}_p \circ I^{+} \subset I^{+}$. Assume
there is $R^{+} \supset D^{+}_p$, $R^{+}\ne D^{+}_p$, such that $
R^{+} \circ I^{+} \subset I^{+}$. Take $(x,z) \in R^{+}\backslash
D^{+}_p$ then for every $y$ such that $(z,y) \in I^{+}$ it is $(x,y)
\in I^{+}$, but $D^{+}_p$ can be characterized as $D^{+}_p=\{(x,z)
\in M: \forall y \in I^{+}(z) \textrm{ it is } x \in I^{-}(y)\}$
(see the proof of lemma 4.2 \cite{minguzzi07b}), thus $(x,z) \in
D^{+}_p$ a contradiction. Analogously, $D^{+}_f$ is the largest set
 which satisfies $ I^+ \circ D^{+}_f \subset I^+$.

 We already know that $D^{+} \circ I^+ \subset I^+$ and $ I^+ \circ
D^{+} \subset I^+$. If $D^{+}=D^{+}_f\cap D^{+}_p$ is not the
largest set which satisfies this property then there is $(x,z)
\notin D^+$,  such that $(x,z) \circ I^{+} \subset I^{+}$ and $ I^+
\circ (x,z) \subset I^+$. The pair $(x,z)$ can't belong to both
$D^+_p$ and $D^+_f$, assume without loss of generality $(x,z)\notin
D^{+}_p$, then $R^{+}=D^{+}_p \cup (x,z)$ is larger than $D^{+}_p$
and satisfies $R^{+} \circ I^+\subset I^+$ a contradiction.

\end{proof}

\section{Feeble distinction}

Future distinction is equivalent to the antisymmetry of $D^+_f$
however, it is also equivalent \cite{minguzzi07b}  to an apparently
weaker requirement, namely $(x,z) \in J^{+}$ and $(z,x) \in D^{+}_f
\Rightarrow x=z$. It is natural to ask whether weak distinction can
be expressed as: $(x,z) \in J^{+}$ and $(z,x) \in D^{+} \Rightarrow
x=z$. As we shall see, the answer is negative and the property
defines a new level in the causal ladder which stays between weak
distinction and causality.

\begin{lemma} \label{lhf}
The following conditions on a spacetime $(M,g)$ are equivalent
\begin{enumerate}
\item $(x, y) \in J^{+}$, $I^{+}(x)= I^{+}(y)$ and $I^{-}(x)= I^{-}(y)$ imply
$x=y$.
\item $(x,y) \in J^{+}$ and $(y,x) \in D^{+} \Rightarrow x=y$. \\
\end{enumerate}

\end{lemma}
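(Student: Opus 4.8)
The plan is to prove the two conditions equivalent by showing each implies the other, unpacking the definition of $D^+$ as $D^+_p \cap D^+_f$ throughout. Recall that $(y,x)\in D^+$ means precisely that $(y,x)\in D^+_f$ and $(y,x)\in D^+_p$, i.e. $x\in\overline{I^+(y)}$ and $y\in\overline{I^-(x)}$. So the whole exercise is to relate the closure conditions ``$x\in\overline{I^+(y)}$ and $y\in\overline{I^-(x)}$'' to the equality conditions ``$I^+(x)=I^+(y)$ and $I^-(x)=I^-(y)$'' under the standing hypothesis $(x,y)\in J^+$.

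\medskip

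First I would prove that condition 1 implies condition 2. Assume condition 1 holds and suppose $(x,y)\in J^+$ together with $(y,x)\in D^+$. From $(y,x)\in D^+$ I extract $x\in\overline{I^+(y)}$ and $y\in\overline{I^-(x)}$. The key observation is the elementary monotonicity fact that if $x\in\overline{I^+(y)}$ then $I^+(x)\subset I^+(y)$ (since $\overline{I^+(y)}\supset I^+(x)$ would follow whenever $x$ is in the closure, because $I^+$ is open and points in the closure of $I^+(y)$ chronologically precede the same events); symmetrically $y\in\overline{I^-(x)}$ gives $I^-(y)\subset I^-(x)$. Meanwhile, from the hypothesis $(x,y)\in J^+$, i.e. $x$ causally precedes $y$, I get the reverse inclusions $I^+(y)\subset I^+(x)$ and $I^-(x)\subset I^-(y)$, since anything chronologically after $y$ is chronologically after $x$ and anything chronologically before $x$ is chronologically before $y$. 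Combining the two pairs of inclusions yields $I^+(x)=I^+(y)$ and $I^-(x)=I^-(y)$, so condition 1 forces $x=y$.

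\medskip

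Next I would prove condition 2 implies condition 1. Assume condition 2 holds and suppose $(x,y)\in J^+$ with $I^+(x)=I^+(y)$ and $I^-(x)=I^-(y)$. The goal is to verify the hypotheses of condition 2, namely that $(y,x)\in D^+$, i.e. $x\in\overline{I^+(y)}$ and $y\in\overline{I^-(x)}$. Using $I^+(x)=I^+(y)$ and the reflexive-closure fact $x\in\overline{I^+(x)}$ (which holds for any event, since $x$ lies on the boundary of its own chronological future), I get $x\in\overline{I^+(x)}=\overline{I^+(y)}$; symmetrically $y\in\overline{I^-(y)}=\overline{I^-(x)}$. Hence $(y,x)\in D^+_f$ and $(y,x)\in D^+_p$, so $(y,x)\in D^+$, and condition 2 delivers $x=y$.

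\medskip

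The main obstacle I anticipate is the handling of the two monotonicity/closure lemmas, in particular making precise that $x\in\overline{I^+(y)}$ yields $I^+(x)\subset I^+(y)$. This rests on the standard push-up property of chronological relations (if $a\in\overline{I^+(y)}$ and $a\ll b$ then $y\ll b$), which I would invoke rather than re-derive. The other potentially delicate point is the direction of the inclusions coming from $(x,y)\in J^+$ versus those coming from the closure membership — these run oppositely, and it is precisely their combination that produces equality of the chronological sets. Care with which inclusion comes from which hypothesis is where a careless proof could go astray, but no genuinely hard step is involved once the push-up property and the reflexive-closure identity $x\in\overline{I^+(x)}$ are in hand.
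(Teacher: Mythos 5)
Your proof is correct and follows essentially the same route as the paper's: both directions unpack $(y,x)\in D^{+}$ into the closure memberships $x\in\overline{I^{+}(y)}$, $y\in\overline{I^{-}(x)}$, combine them with the consequences of $(x,y)\in J^{+}$ to obtain $I^{+}(x)=I^{+}(y)$ and $I^{-}(x)=I^{-}(y)$ (via the standard push-up property), and the converse uses exactly the paper's observation that $x\in\overline{I^{+}(x)}$ and $y\in\overline{I^{-}(y)}$ together with the set equalities yield $(y,x)\in D^{+}$. The only cosmetic difference is that you argue directly where the paper argues by contraposition, and you draw the inclusions $I^{+}(y)\subset I^{+}(x)$, $I^{-}(x)\subset I^{-}(y)$ straight from $(x,y)\in J^{+}$ rather than first converting it into closure memberships.
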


\begin{proof}
1 $\Rightarrow$ 2. Assume 2 does not hold then there are $x$ and
$y$, $x < y$ such that $(y,x) \in D^{+}$ which reads ``$x<y$ and $x
\in \overline{I^{+}(y)}$ and $y \in \overline{I^{-}(x)}$''. $y \in
\overline{I^{+}(x)}$ and $x \in \overline{I^{+}(y)}$ implies
$I^{+}(y)=I^{+}(x)$ while $x \in \overline{I^{-}(y)}$ and $y \in
\overline{I^{-}(x)}$ implies $I^{-}(y)=I^-(x)$, thus \ref{1} does
not hold.

2 $\Rightarrow$ 1. Assume 1 does not hold. There are $x < y$ such
that $I^{+}(x)= I^{+}(y)$ and $I^{-}(x)= I^{-}(y)$, thus $x \in
\overline{I^{+}(x)}=\overline{I^{+}(y)}$,  $y \in
\overline{I^{-}(y)}=\overline{I^{-}(x)}$. Thus $(x,y) \in J^{+}$ and
$(y,x) \in D^{+}$, but $x \ne y$, a contradiction.
\end{proof}

\begin{definition}
A spacetime is said to be {\em  feebly  distinguishing} if it
satisfies one of the equivalent properties of lemma \ref{lhf}. In
short, a spacetime is feebly distinguishing if there is no pair of
causally related events with the same chronological pasts and
futures.
\end{definition}

\begin{lemma}
If a spacetime is weakly distinguishing then it is feebly
distinguishing. If a spacetime if feebly distinguishing then it is
causal.
\end{lemma}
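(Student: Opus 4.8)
The plan is to prove both implications directly from the definitions, using the characterizations already established in the excerpt.

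For the first implication, that weak distinction implies feeble distinction, I would argue by contraposition or simply note that the feeble distinction condition is a restriction of the weak distinction condition to a smaller class of pairs. Specifically, using the form of lemma \ref{lhf}, feeble distinction asks that whenever $(x,y) \in J^{+}$ and $(y,x) \in D^{+}$ we have $x = y$. Since $J^{+} \subset D^{+}_f$ (indeed $(x,y)\in J^+$ gives $y \in \overline{I^+(x)}$, so $(x,y)\in D^+_f$) and likewise $(x,y)\in J^+$ gives $x\in\overline{I^-(y)}$ so $(x,y)\in D^+_p$, any pair with $(x,y) \in J^{+}$ automatically satisfies $(x,y) \in D^{+}$. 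Therefore if $(x,y) \in J^{+}$ and $(y,x) \in D^{+}$, then both $(x,y) \in D^{+}$ and $(y,x) \in D^{+}$, and antisymmetry of $D^{+}$ (which is weak distinction by lemma \ref{mos}, item \ref{2}) forces $x = y$. This gives feeble distinction for free.

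For the second implication, that feeble distinction implies causality, I would again use contraposition. Suppose the spacetime is not causal, so there is a closed causal curve; then there exist distinct events $x \neq y$ with $x \le y \le x$, hence $(x,y) \in J^{+}$ and $(y,x) \in J^{+}$. Since $J^{+} \subset D^{+}$ (as noted above, every causally related pair lies in both $D^+_f$ and $D^+_p$, hence in $D^+$), we get $(x,y) \in J^{+}$ and $(y,x) \in D^{+}$ with $x \neq y$, which violates the feeble distinction condition in the form of lemma \ref{lhf}, item 2. Hence feeble distinction implies causality.

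I do not anticipate a genuine obstacle here, since both implications reduce to the elementary inclusion $J^{+} \subset D^{+}$ together with the antisymmetry reformulation of weak distinction. The one point requiring slight care is verifying $J^{+} \subset D^{+}$, i.e. that a causal relation $x \le y$ places the pair in the closure of both the chronological future of $x$ and the chronological past of $y$; this follows from the standard fact that $J^+ \subset \overline{I^+}$ in any spacetime. With that inclusion in hand, both halves are immediate, so the proof will be short.
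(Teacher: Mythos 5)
Your proof is correct and follows exactly the paper's route: the paper's entire proof is the one-line observation that it ``follows trivially from the fact that $J^{+}\subset D^{+}$,'' which is precisely the inclusion you verify and then combine with the antisymmetry characterization of weak distinction (lemma \ref{mos}) and the second characterization of feeble distinction (lemma \ref{lhf}). You have merely spelled out the details the paper leaves implicit.
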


\begin{proof}
It follows trivially from the fact that $J^{+}\subset D^{+}$.
\end{proof}

It is easy to check that feeble distinction differs from causality,
see for instance the spacetime  of figure \ref{wdis}(A). It is
instead a non trivial matter to establish that feeble distinction
differs from weak distinction.

Figure \ref{feeble} gives an example of feebly distinguishing
non-weakly distinguishing spacetime.

\begin{figure}
\centering \psfrag{x}{$x$} \psfrag{y}{$y$} \psfrag{t}{$t$}
\psfrag{p}{$p$} \psfrag{q}{$q$} \psfrag{r}{$r$} \psfrag{s}{$s$}
\psfrag{k}{$\!\!K$} \psfrag{c}{$\Gamma^+_f(p)$}
\psfrag{d}{$\Gamma^+_f(p')$} \psfrag{f}{$F$} \psfrag{w}{$p'$}
\psfrag{indentify}{Identify} \psfrag{remove}{remove $K'$}
\psfrag{removed}{remove filter}
\includegraphics[width=10cm]{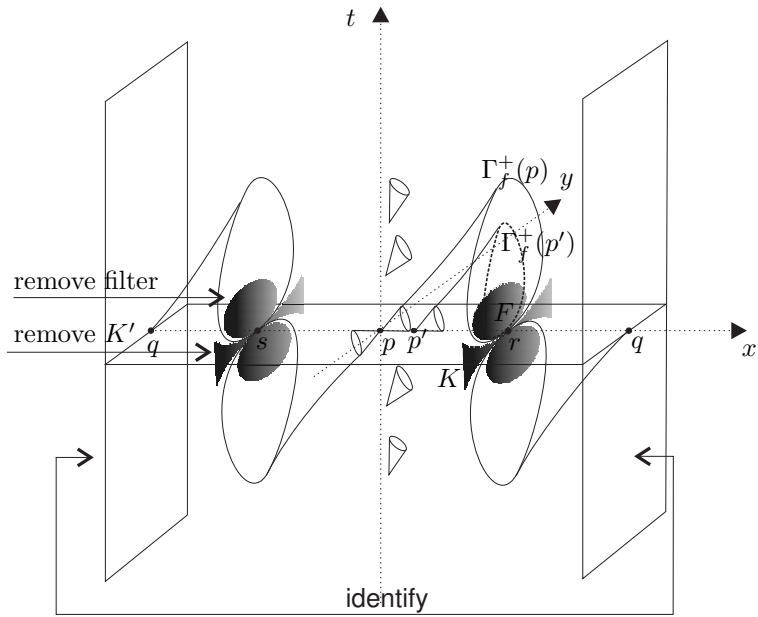}
\caption{An example of feebly distinguishing non-weakly
distinguishing spacetime. The shaded gray regions have been removed
from the manifold. The events $p$ and $q$ share the same
chronological past and future and form the only pair of events with
this property. They are not causally related thanks to the removed
sets, thus the spacetime is feebly distinguishing.} \label{feeble}
\end{figure}

\subsection{The spacetime example}
Since the construction of the spacetime is quite involved I offer
the next explanation.

The starting point is a spacetime $\mathbb{R}\times S^1 \times
\mathbb{R}$ of coordinates $(t,x,y)$, $x \in [-2, 2]$, in which the
metric has Killing vectors $\p/\p y$, $\p/\p x$, and the qualitative
behavior of the light cones is as displayed in  figure \ref{feeble}.
The analytical expression of the metric is not important, but the
expression $\dd s^2=-(\cosh t-1)^2(\dd t^2-\dd x^2)-\dd t \dd x+ \dd
y^2$ would be fine. Consider the points $p=(0,0,0)$, $r=(0,1,0)$,
$s=(0,-1,0)$ and $q=(0,2,0)=(0,-2,0)$.

I am going to remove suitable sets from the surfaces $x=\pm1$, so as
to make the spacetime weakly distinguishing but for the events
$p,q$, which actually will share the same chronological past and
future. Since $p$ and $q$ will not be causally related the spacetime
so obtained, denoted $(M,g)$, will be feebly distinguishing but
non-weakly distinguishing.

The closed sets to be removed are as follows. First a set $K$, on
the  surface $x=1$,  should be removed in order to guarantee that,
if weak distinction is violated, then the violation  must happen on
the $x$ axis. The set $K$ has two boundaries which are tangent to
the null plane $t=0$, and whose radius of curvature at the point of
contact with the plane must be sufficiently large so that the causal
futures and pasts of the points lying in the $x$ axis can intersect
it only after one complete loop (i.e. the timelike  curves issuing
from a point on the $x$ axis can intersect $K$ only from the second
intersection with the plane $x=+1$). The figure displays that
another set $K'$ of the same shape has been removed, but this is
done for symmetry purposes, and is not strictly necessary.

Let $z$ be a point in the $x$ axis. Denote with $\Gamma_f^+(z)$
 the points of first intersection  of the
causal curves issuing from $z$ with the surface $x=+1$. This set has
a boundary whose equation $t(y)=h^+_f(z)(y)$ has a Taylor expansion
at $y=0$ which can be easily determined with a little algebra. An
analogous set $\Gamma_f^-(z)$ (where the minus sign denotes that the
intersection of the causal curves is with the surface $x=-1$) and
past versions can be defined. The point here is that the first terms
of this Taylor expansion are quadratic and the coefficient decreases
with the distance of $z$ from the surface. The idea is to remove
inside $\Gamma_f^+(p)$, $\Gamma_p^-(p)$, $\Gamma_f^-(q)$ and
$\Gamma_p^{+}(q)$ suitable sets passing through $s$ and $r$ so that
``$(p,q) \in D^{+}$ and $(q,p) \in D^{+}$"  would still hold as the
future (resp. past) of $p$ and $q$ would pass `below' (resp.
`above') the removed sets. For instance, $q$ would still remain in
the closure of the causal future of $p$. The sets to be removed will
be called `filters' and have an elliptical shape in the figure.

The filters are chosen so that $p$ and $q$ are the only two points
with the same past and future. Let us focus on $\Gamma_f^+(p)$. If
its boundary has Taylor expansion $h^+_f(p)(y)={a_2(p)} y^2
+{a_4(p)} y^4 +\ldots$, choose the set to be removed so that its
boundary has Taylor expansion $a_2(p) y^2 +2\vert{a_4(p)} y^4
+\ldots\vert$. Now, what happens is that chosen $p'$ between $p$ and
$r$, since $a_2(p')>a_2(p)$, there is a  $\epsilon(p')>0$ such that
the points in $\Gamma_f^+(p')$ with $t\le \epsilon(p')$ belong to
the removed set. In other words, the causal curves issued from $p'$
may `pass' the filter but if so they are forced to move towards
events with $t>\epsilon(p')$ and then, because of the shape of the
metric, they can't return arbitrarily close to the $x$ axis. As a
result, for instance $p \notin \overline{I^{+}(p')}$.

Note that the points on the segments $(s,p]$ and $(r,q]$  on the $x$
axis have the same chronological future, and the points on the
segments  $[p,r)$ and $[q,s)$ have the same chronological past.

Another method, perhaps simpler, to define the filter is as follows.
I describe it for the filter denoted $F$ in the figure, the other
cases being analogous. Take a sequence $p_n=(0,1/n,0)$, $p_n \to p$,
and define $U_n=\{z=(a,b,c): a\le 1/n, b=1, c \in \mathbb{R}\}$.
Finally, define $F=\bigcup_n \Gamma_f^+(p_n)\cap U_n$. This filter
has the same causal effect of the one described using the Taylor
expansion, however, note that it has a characteristic flower shape
and not an elliptic one as in the figure.

\section{Strengthening the definition of causal continuity}

Finally, I give the proof to the  results mentioned in the
introduction.

\begin{theorem} \label{psr}
If the spacetime is future (past) reflecting  then $D^{+}=D^{+}_p$
(resp. $D^{+}=D^{+}_f$). In particular, feeble distinction and
future (past) reflectivity imply past (resp. future) distinction.
\end{theorem}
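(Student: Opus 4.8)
The plan is to separate the statement into the two relational identities $D^{+}=D^{+}_{p}$, $D^{+}=D^{+}_{f}$ and the subsequent distinction claim, and to prove everything at the level of the relations $D^{+}_{p}, D^{+}_{f}, A^{+}$ rather than by curve constructions. For the first part I would argue as follows. By the characterization recalled just before the reflectivity lemma, future reflectivity is equivalent to $D^{+}_{f}=A^{+}$. Since by definition $D^{+}=D^{+}_{p}\cap D^{+}_{f}$ and since $D^{+}_{p}\subset A^{+}$, substituting $D^{+}_{f}=A^{+}$ gives $D^{+}=D^{+}_{p}\cap A^{+}=D^{+}_{p}$. The past case is the exact mirror: past reflectivity means $D^{+}_{p}=A^{+}$, whence $D^{+}=D^{+}_{f}\cap A^{+}=D^{+}_{f}$, using $D^{+}_{f}\subset A^{+}$.

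For the second part I would avoid trying to produce a causal curve between the two offending events and instead exploit the weaker-looking characterization of distinction. The excerpt recalls that future distinction is equivalent to ``$(x,z)\in J^{+}$ and $(z,x)\in D^{+}_{f}\Rightarrow x=z$''; by time reversal (exchanging $I^{+}\leftrightarrow I^{-}$, $D^{+}_{f}\leftrightarrow D^{+}_{p}$) past distinction is equivalent to ``$(x,z)\in J^{+}$ and $(z,x)\in D^{+}_{p}\Rightarrow x=z$''. Now assume feeble distinction and future reflectivity. By the first part $D^{+}_{p}=D^{+}$, so this past-distinction criterion reads ``$(x,z)\in J^{+}$ and $(z,x)\in D^{+}\Rightarrow x=z$'', which is \emph{verbatim} the second equivalent form of feeble distinction in lemma \ref{lhf}. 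Thus feeble distinction is exactly the hypothesis needed, and past distinction follows immediately; the future-distinction conclusion under past reflectivity is obtained identically from $D^{+}=D^{+}_{f}$ together with the stated future-distinction characterization.

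The step I expect to carry the weight is not any computation but the use of the equivalence between full past distinction and its weaker $J^{+}/D^{+}_{p}$ form: this is the only non-formal ingredient, and I would invoke it (as the time-dual of the cited future-distinction fact) from \cite{minguzzi07b}. Everything else is relation bookkeeping. The one subtlety to watch is the direction of the inclusion collapsing the intersection: the argument needs $D^{+}_{p}\subset D^{+}_{f}$, i.e.\ $D^{+}_{p}=D^{+}$, and future reflectivity supplies precisely this through $D^{+}_{f}=A^{+}\supset D^{+}_{p}$, so I would take care not to accidentally require the reverse inclusion $D^{+}_{f}\subset D^{+}_{p}$, which would instead correspond to past reflectivity.
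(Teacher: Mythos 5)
Your proposal is correct and follows essentially the same route as the paper's own proof: first collapse the intersection via $D^{+}=D^{+}_{p}\cap D^{+}_{f}=D^{+}_{p}\cap A^{+}=D^{+}_{p}$ under future reflectivity, then observe that the weak $J^{+}/D^{+}_{p}$ characterization of past distinction (the time-dual of the cited future-distinction fact from the reference) becomes verbatim the second equivalent form of feeble distinction in lemma \ref{lhf}. The only difference is cosmetic: you make explicit the appeal to that weaker characterization of past distinction, which the paper uses implicitly.
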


\begin{proof}
If the spacetime is future reflecting then $D^{+}_f=A^{+}$, but
$D^{+}_p \subset A^{+}$ thus $D^{+}=D^+_p \cap D^+_f=D^{+}_p$. Thus
under future reflectivity, $(x,y) \in J^{+}$ and $(y,x) \in D^{+}_p$
is equivalent to $(x,y) \in J^{+}$ and $(y,x) \in D^{+}$ which  by
feeble distinction implies $x=y$.
%
%Under future reflectivity the antisymmetry condition for $D^{+}$
%implies the antisymmetry condition for $D^{+}_p$, i.e. weak
%distinction implies past distinction.
%
%Future reflectivity reads $x \in \overline{I^{-}(y)} \Rightarrow y
%\in \overline{I^{+}(x)}$. Weak distinction in the form of the
%antisymmetry of $D^{+}$ reads ``$y \in \overline{I^{+}(x)}$ and $x
%\in \overline{I^{-}(y)}$ and $x \in \overline{I^{+}(y)}$ and $y \in
%\overline{I^{-}(x)}$ $\Rightarrow$ $x=y$''. Thus $x \in
%\overline{I^{-}(y)}$ and $y \in \overline{I^{-}(x)}$ $\Rightarrow$
%$y \in \overline{I^{+}(x)}$ and $x \in \overline{I^{-}(y)}$ and $x
%\in \overline{I^{+}(y)}$ and $y \in \overline{I^{-}(x)}$ by future
%reflectivity which implies $x=y$ by weak distinction. Thus $(x,y)
%\in D^{+}_p$ and $(y,x) \in D^{+}_p$ $\Rightarrow $ $x=y$, i.e. the
%spacetime is past distinguishing.
The proof in the other case is
analogous.

\end{proof}

\begin{corollary} \label{cor}
A spacetime is causally continuous iff it is feebly distinguishing
and reflecting.
\end{corollary}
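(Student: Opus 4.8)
The plan is to prove the two implications separately, taking causal continuity in its standard form as distinction (both future and past distinguishing) together with reflectivity (both future and past reflecting). The forward direction is immediate: if $(M,g)$ is causally continuous it is reflecting by definition, and it is distinguishing, hence in particular future distinguishing, so by the lemma that past or future distinction implies weak distinction it is weakly distinguishing, and by the lemma that weak distinction implies feeble distinction it is feebly distinguishing. Thus a causally continuous spacetime is feebly distinguishing and reflecting.

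For the converse I would rely entirely on theorem \ref{psr}. Assume $(M,g)$ is feebly distinguishing and reflecting, and note that reflecting means precisely future reflecting and past reflecting simultaneously. Applying theorem \ref{psr} with future reflectivity, feeble distinction gives past distinction; applying it again with past reflectivity, feeble distinction gives future distinction. Together these yield full distinction, and combined with the assumed reflectivity this is exactly causal continuity.

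The step I expect to be the crux, and the reason the statement is a genuine corollary rather than a tautology, is the double use of reflectivity in the converse: a single reflectivity hypothesis only upgrades feeble distinction to one of the two distinction conditions, so both future and past reflectivity are needed to recover the full distinction property. Beyond this bookkeeping no obstacle remains, since the essential work, namely that feeble distinction may replace distinction once reflectivity is present, is carried out in theorem \ref{psr}.
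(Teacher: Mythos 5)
Your proposal is correct and follows essentially the same route as the paper: the forward direction is the trivial chain distinction $\Rightarrow$ weak distinction $\Rightarrow$ feeble distinction together with reflectivity, and the converse applies theorem \ref{psr} twice (once with future reflectivity to get past distinction, once with past reflectivity to get future distinction) to recover full distinction and hence causal continuity. The only difference is cosmetic: the paper dismisses the forward direction as trivial, whereas you spell out the intermediate lemmas.
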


\begin{proof}
The only if part is trivial as it follows from the usual definition
of causal continuity as a spacetime which is distinguishing and
reflecting. For the if part note that by theorem \ref{psr}, feeble
distinction and past reflectivity imply future distinction.
Moreover, feeble distinction and future reflectivity imply past
distinction, thus feeble distinction and reflectivity imply
distinction. Thus the spacetime  is distinguishing and using again
the assumed reflectivity the causal continuity follows.
\end{proof}

Feeble distinction implies causality, however, in the definition of
causal continuity causality cannot replace feeble distinction,
indeed it is quite easy to construct an example of spacetime which
is causal, reflecting and non-feebly distinguishing (and hence
non-causally continuous), see figure \ref{wdis}(A). Actually, this
example is also non-total imprisoning. I shall prove in a related
work \cite{minguzzi07f} that feeble distinction implies non-total
imprisonment which implies causality. However, in the definition of
causal continuity feeble distinction can not even be relaxed to
non-total imprisoning as the example of figure \ref{wdis}(A) again
proves.

\section{Conclusions}

The property of weak distinction has been studied showing that it is
equivalent to the antisymmetry of the causal relation $D^+$ defined
by Eq. (\ref{nis}). The set $D^+$ enters also in an alternative
definition of reflectivity, namely the condition $D^+=A^+$.

Between weak distinction and causality I defined another level,
called {\em feeble distinction}. Examples have been provided  which
show that feeble distinction indeed differs from weak distinction
and causality (actually it differs from non-total imprisonment).

Next a basic step has been the proof that feeble distinction and
future (past) reflectivity implies past (resp. future) distinction,
a curious statement that mixes future and past properties. Using it,
it has been finally  shown that in the definition of causal
continuity it is possible to replace the distinction property with
feeble distinction. Some known examples prevent the possibility of
weakening the feeble distinction property to the level which stays
immediately below it in the causal ladder. Since the causal ladder
is not fixed, and new levels can always be found, there is some
natural uncertainty on what this optimality could mean. In any case
I will show in a related work \cite{minguzzi07f} that the non-total
imprisonment property stays between feeble distinction and
causality, and  figure \ref{wdis}(A) proves that in the definition
of causal continuity, feeble distinction can not be replaced by
non-total imprisonment. In this sense, the definition of causal
continuity given in this work is optimal.

%\bibliography{../../bibliografie/simultaneity,../../bibliografie/libri,../../bibliografie/miei,../../bibliografie/mieiPreprints,../../bibliografie/mieiProceedings}
%\bibliographystyle{cqg}

\end{document}